\title{Quantum State Synthesis: Relation with Decision Complexity Classes and Impossibility of Synthesis Error Reduction}
\date{June 2024}
\newtheorem{thm}{Theorem}
\newtheorem{defn}{Definition}
\newtheorem{lemma}{Lemma}
\newtheorem{prop}{Proposition}
\newtheorem{corollary}{Corollary}
\newtheorem{conjecture}{Open question}
\newcommand{\poly}{\ensuremath{\mathrm{poly}}}
\newcommand{\td}{\ensuremath{\mathrm{td}}}
\newcommand{\ketbra}[2]{\ket{#1}\!\!\bra{#2}}
\newcommand{\Tr}{\ensuremath{\mathrm{Tr}}}
\newcommand{\cfont}[1]{\ensuremath{\mathsf{#1}}}
\newcommand{\pfont}[1]{\ensuremath{\mathbf{#1}}}
\newcommand{\acc}{\mathrm{acc}}
\newcommand{\out}{\mathrm{out}}
\newcommand{\outacc}{{\out|\acc}}
\newcommand{\NP}{\cfont{NP}}
\renewcommand{\P}{\cfont{P}}
\newcommand{\FNP}{\cfont{FNP}}
\newcommand{\TFNP}{\cfont{TFNP}}
\newcommand{\FP}{\cfont{FP}}
\newcommand{\s}{\cfont{state}}
\newcommand{\rs}{\cfont{relationalState}}
\newcommand{\BQP}{\cfont{BQP}}
\newcommand{\QMA}{\cfont{QMA}}
\newcommand{\QCMA}{\cfont{QCMA}}
\newcommand{\QIP}{\cfont{QIP}}
\newcommand{\PSPACE}{\cfont{PSPACE}}
\newcommand{\R}{\cfont{R}}
\newcommand{\sQMA}{\s\QMA}
\newcommand{\sQCMA}{\s\QCMA}
\newcommand{\sQIP}{\s\QIP}
\newcommand{\sPSPACE}{\s\PSPACE}
\newcommand{\rsBQP}{\rs\BQP}
\newcommand{\rsQMA}{\rs\QMA}
\newcommand{\rsQCMA}{\rs\QCMA}
\newcommand{\rsQIP}{\rs\QIP}
\newcommand{\rsR}{\rs\R}
\newcommand{\Lyes}{L^\mathrm{yes}}
\newcommand{\Lno}{L^\mathrm{no}}
\newcommand{\GW}{\pfont{GW}}
\newcommand{\GWyes}{\GW^\mathrm{yes}}
\newcommand{\GWno}{\GW^\mathrm{no}}
\newcommand{\sqcip}[1]{}
\newcommand{\constant}[1]{}
\newcommand{\tgw}[1]{}
\renewcommand{\epsilon}{\varepsilon}
\renewcommand{\emptyset}{\varnothing}
\renewcommand{\bar}[1]{\overline{#1}}
\newcommand{\calH}{\ensuremath{\mathcal{H}}}
\newcommand{\bbN}{\ensuremath{\mathbb{N}}}
\begin{document}

\author[1,2]{Hugo Delavenne}
\author[3]{François Le Gall}

\affil[1]{Institut Polytechnique de Paris, Ecole Polytechnique, LIX, Palaiseau, France}
\affil[2]{INRIA}
\affil[3]{Graduate School of Mathematics, Nagoya University}

\date{}
\maketitle

\begin{abstract}
    This work investigates the relationships between quantum state synthesis complexity classes (a recent concept in computational complexity that focuses on the complexity of preparing quantum states) and traditional decision complexity classes. We especially investigate the role of the \emph{synthesis error parameter}, which characterizes the quality of the synthesis in quantum state synthesis complexity classes. 
    We first show that in the high synthesis error regime, collapse of synthesis classes implies collapse of the equivalent decision classes. 
    For more reasonable synthesis error, we then show a similar relationships for $\BQP$ and $\QCMA$. Finally, we show that for quantum state synthesis classes it is in general impossible to improve the quality of the synthesis: unlike the completeness and soundness parameters (which can be improved via repetition), the synthesis error cannot be reduced, even with arbitrary computational power. 
\end{abstract}

\section{Introduction}

While quantum complexity theory traditionally investigates the complexity of decision problems (i.e., Boolean functions), a recent line of research \cite{Aar16,BEMPQY23,DLLM23,Ros24,MY23,RY22} started investigating the complexity of constructing quantum states, a task called \emph{quantum state synthesis}. Those prior works showed that the behavior of quantum state synthesis complexity classes is often similar to the behavior of decision complexity classes: the equality $\PSPACE=\QIP$ \cite{JJUW11} has its state synthesis equivalent $\sPSPACE=\sQIP$ \cite{RY22,MY23}, the equality $\QCMA=\QCMA[1,\frac12]$ \cite{JKNN12} has the equivalent $\sQCMA=\sQCMA[1,\frac12]$ \cite{DLLM23}, and the equality $\QIP=\QIP(O(1))$ \cite{Wat03} has the equivalent $\sQIP=\sQIP(O(1))$ \cite{Ros24}.

In order to further investigate the relationships between quantum state synthesis complexity classes and decision complexity classes, in this paper we introduce a new definition of quantum state synthesis complexity classes in which we allow an arbitrary number of target states per input (prior definitions required exactly one target state per input). Our definition is closer to the definition of classical functional classes like \FP{} or \FNP{} introduced in \cite{MP89} and thus more closely related to Boolean classes and languages. We stress that this new definition is a generalization of the definitions of prior works: we recover the previous definitions as a special case by requiring one target state per input.

Based on this new definition, we further explore the relationship between quantum state synthesis complexity classes and decision complexity classes. Here are our main contributions:
\begin{itemize}
\item[(1)]
We investigate the relationship between the class $\BQP$ and the corresponding quantum state synthesis complexity class denoted $\rsBQP_\delta$. Here $\delta\in[0,1]$ is a parameter called the \emph{synthesis error parameter} that characterizes the imprecision of the synthesis (the goal is to have $\delta$ as small as possible). \Cref{prop:equivalence-state-decision} shows that if we take $\delta$ very close to 1, i.e., if we allow exponentially small fidelity between the ouput and the state we want to synthesize, there exists a tight relationship between $\BQP$ and $\rsBQP_\delta$. This relationship remains true for other complexity classes (e.g., $\QMA$ and $\rsQMA_\delta$, or $\QCMA$ and $\rsQCMA_\delta$).
\item[(2)] 
This above result yields the question of proving relationships between quantum state synthesis complexity classes and decision complexity classes for more reasonable values of the parameter $\delta$. We make a first step in this direction. We especially investigate how proving separations for quantum state synthesis classes relates to proving separations for decision complexity classes. We first observe that  $\BQP\neq \QMA$ implies $\rsQMA_\delta\not\subseteq\rsBQP_\delta$ for all $\delta$ (\Cref{cor:no-collapse})  and $\BQP\neq \QCMA$ implies $\rsQCMA_\delta\not\subseteq\rsBQP_\delta$ for all~$\delta$ (\Cref{cor:no-collapse2}).
Our main contribution (in \Cref{prop:guess-classical-witness}) proves the converse for the case of $\QCMA$ (with a small loss in~$\delta$): if there exist $\delta$ and a polynomial $q$ such that $\rsQCMA_\delta\not\subseteq\rsBQP_{\delta+1/q}$, then $\BQP\neq\QCMA$. These results suggest that progress on understanding decision complexity classes can be done by investigating quantum state synthesis classes.
\item[(3)]
We finally investigate whether the synthesis error parameter $\delta$ can be reduced, i.e., whether the quality of the synthesis can be increased, just like completeness and soundness can be improved via repetition. We show that for quantum state synthesis classes this is in general impossible: we prove (see \Cref{cor:impossibility}) that $\rsBQP_\delta\not\subset\rsBQP_{\delta-\epsilon}$ holds for any $\epsilon>0$. We actually prove in \Cref{thm:impossibility} that reducing $\delta$ is impossible even if we allow arbitrary computational power. This result holds for the definitions used in prior works as well and shows the importance of the parameter~$\delta$ when defining state synthesis complexity classes. This result is closely related to the impossibility of error reduction for unitary synthesis problems of \cite[Proposition 3.8]{BEMPQY23}, but it is stronger in the sense that there is no gap between the source and target errors.
\end{itemize}

\section{Definition of relational state synthesis complexity classes}

\label{sec:definitions}
We first recall the definition of classical functional classes \cite{MP89}.
In this work, we always  use the binary alphabet $\Sigma=\{0,1\}$.

\begin{defn}[\FP, \FNP, \TFNP]
    A relation $R\subseteq\Sigma^*\times\Sigma^*$ is in $\FP$ iff there exists a polynomial-time Turing machine $M$ such that if there exists $y\in\Sigma^*$ such that $(x,y)\in R$ then $M(x)$ outputs such a $y$, and otherwise $M(x)$ rejects.

    A relation $R\subseteq\Sigma^*\times\Sigma^*$ is in $\FNP$ iff there exists a polynomial-time Turing machine $M$ such that if there exists $y\in\Sigma^*$ such that $(x,y)\in R$ then there exists $w\in\Sigma^*$ such that $M(x,w)$ outputs such a $y$, and otherwise, for any $w\in\Sigma^*$, $M(x,w)$ rejects.

    A relation $R\in\FNP$ is in $\TFNP$ iff $\forall x\in\Sigma^*,\exists y\in\Sigma^*, (x,y)\in R$.
\end{defn}

The relations for state synthesis are a bit more complex since we have to specify the output space for every input size.

\begin{defn}[State synthesis relation]
    For $n\in\bbN$, let $\calH_n$ be a Hilbert space and $\mathcal{O}_n$ be the set of density matrices over $\calH_n$.
    A \emph{state synthesis relation} is a triple $(R,\Lyes,\Lno)$ where $(\Lyes,\Lno)$ is a promise language and
    \[
    R=\{(x,\rho)\mid x\in \Lyes, \rho\in S_x\}
    \]
    for non-empty subsets $S_x\subseteq \mathcal{O}_{|x|}$.
    We often omit the language. We simply use $R$ to denote the state synthesis relation, we write $\Lyes_R=\Lyes$, $\Lno_R=\Lno$ and $L_R=\Lyes\cup \Lno$ and define 
 \[
 xR:=\{\rho\in \mathcal{O}_{|x|}\mid (x,\rho)\in R\}
 \]
 for any $x\in L_R$ (note that $xR\notin\emptyset$ for any $x\in\Lyes_R$ and $xR=\emptyset$ for any $x\in\Lno_R$). We also define a function $k_R:\bbN\to\bbN$ that gives the number of qubits of $\calH_n$.
\end{defn}


The quantum circuits considered in this paper are bounded in size and uniform. We give formal definition of these notions.

\begin{defn}[Polynomial-size family of circuits]
    A family of quantum circuits $(C_n)_{n\in\bbN}$ is said to be polynomial-size if there exists a polynomial $p$ such that for any $n\in\bbN$, $C_n$ contains at most $p(n)$ gates.
\end{defn}

\begin{defn}[Uniform family of circuits]
    A family of quantum circuits $(C_n)_{n\in\bbN}$ is said to be polynomial-time-uniform, or simply uniform, if there exists a Turing machine $M$ working in polynomial-time such that for any $n\in\bbN$, $M(n)$ outputs a description of $C_n$.
\end{defn}

Due to the continuity of the space of quantum states, we need a measure and a threshold to quantify the tolerated error on the state synthesis. We use the trace distance between density matrices, and extend it to a distance between a density matrix and a set of density matrices.

\begin{defn}[Trace distance]
    Let $\rho$ and $\sigma$ be two density matrices on the same space. Define 
    \[
    \td(\rho,\sigma):=\frac12\Tr\left(\sqrt{(\rho-\sigma)^\dagger(\rho-\sigma)}\right).
    \]
    For a density matrix $\rho$ and a set $S$ of density matrices over the same space, define 
    \[
    \td(\rho,S):=\min_{\sigma\in S}\td(\rho,\sigma).
    \]
\end{defn}



In this work we consider families of circuits $C_n$ taking some classical input $\ket{x}$ and possibly some other input $\ket{\psi}$ (for a witness). We denote by $C_x(\ket{\psi})$ the circuit $C_{|x|}(\ket{x}\ket{\psi})$.
The circuit has a specific qubit that is measured at the end of the computation. The measurement outcome is called the acceptance bit and denoted by $C_x^\acc(\ket{\psi})$. 
When there is an output channel, we denote by $C_x^\out(\ket{\psi})$ the quantum state outputted on this channel. Note that this state depends on the value of the acceptance bit. We denote by $C_x^\outacc(\ket{\psi})$ the quantum state outputted on this channel when $C_x^\acc(\ket{\psi})=1$. When there is no witness we remove $\ket{\psi}$ from all these notations, e.g., we use write the acceptance bit simply as $C_x^\acc$.

We are now ready to introduce relational state synthesis complexity classes.

\begin{defn}[\rsBQP]
    \label{def-sBQP}
    Let $c,s,\delta\colon\bbN\to[0,1]$ be completeness, soundness and synthesis error functions.
    A state synthesis relation $R$ is in $\rsBQP_\delta[c,s]$ if there exists a uniform family of polynomial-size quantum circuits $(C_n)_{n\in\bbN}$ such that for $x\in L_R$:
    \begin{itemize}
    \item completeness: if $xR\neq\emptyset$ then $\Pr\big(C_{x}^\acc=1\big)\geq c(|x|)$ and $\td(C_{x}^\outacc,xR)\leq\delta(k_R(|x|))$.
    \item soundness: if $xR=\emptyset$ then $\Pr\big(C_{x}^\acc=1 \big)\leq s(|x|)$.
    

    \end{itemize}
\end{defn}

\begin{defn}[\rsQMA]
    \label{def-sQMA}
    Let $c,s,\delta\colon\bbN\to[0,1]$ be functions.
    A state synthesis relation $R$ is in $\rsQMA_\delta[c,s]$ if there exists a uniform family of polynomial-size quantum circuits $(C_n)_{n\in\bbN}$ such that for $x\in L_R$:
    \begin{itemize}
    \item completeness: if $xR\neq\emptyset$ then there exists a quantum witness $\ket{\psi}$ such that $\Pr\big(C_{x}^\acc(\ket{\psi})=1\big)\geq c(|x|)$.
    \item soundness: for any $\ket{\psi}$, if both $xR\neq\emptyset$ and $\td(C_{x}^\outacc(\ket{\psi}),xR)>\delta(k_R(|x|))$ hold then $\Pr\big(C_{x}^\acc(\ket{\psi})=1\big)\leq s(|x|)$; and if $xR=\emptyset$ then $\Pr\big(C_{x}^\acc(\ket{\psi})=1\big)\leq s(|x|)$.
    \end{itemize}
\end{defn}

The definitions used in the previous papers \cite{RY22,MY23,Ros24,BEMPQY23,DLLM23} do not involve relations since exactly one output is expected per input. We rephrase the definition of \s\BQP{} and \sQMA{} from \cite{DLLM23} by using our definitions of \rsBQP{} and \rsQMA.

\begin{defn}[\s\BQP, \s\QMA]
    \label{def-fake-sQMA}
   For any $c,s,\delta\colon\bbN\to[0,1]$,
    \begin{align*}
    \s\BQP_\delta[c]&=\{R\in\rsBQP_\delta[c,0]\mid \forall x\in L_R, |xR|=1\},\\
    \s\QMA_\delta[c,s]&=\{R\in\rsQMA_\delta[c,s]\mid \forall x\in L_R, |xR|=1\}.
    \end{align*}
\end{defn}

We define the class \rsQCMA{} similarly to \rsQMA{} but with a restriction to witnesses being states in the computational basis (i.e., classical strings).
We also define a class \rsR{} corresponding to states synthesized by arbitrary (uniform) quantum circuits (this class can be seen as the equivalent of the class $\mathsf{R}$ of recursive languages in decision complexity theory):

\begin{defn}[\rsR]
    \label{def-sR}
    Let $c,s,\delta\colon\bbN\to[0,1]$ be completeness, soundness and synthesis error functions.
    A state synthesis relation $R$ is in $\rsR_\delta[c,s]$ if there exists an unboundedly powerful Turing machine such that for any $n\in\bbN$, $M(1^n)$ halts and outputs the description of a quantum circuit $C_n$ such that for $x\in L_R$:
    \begin{itemize}
    \item completeness: if $xR\neq\emptyset$ then $\Pr\big(C_{x}^\acc=1\big)\geq c(|x|)$ and $\td(C_{x}^\outacc,xR)\leq\delta(k_R(|x|))$.
    \item soundness: if $xR=\emptyset$ then $\Pr\big(C_{x}^\acc=1\big)\leq s(|x|)$.
    \end{itemize}
\end{defn}


Finally, we show that the gap between the completeness and the soundness can be amplified for these classes. For \rsBQP{}, \rsQCMA{} and \rsQMA{}, the proof of gap amplification of \cite{DLLM23} applies directly since it amplifies the completeness and soundness while preserving the target state:
\begin{prop}[Gap amplification]
    \label{amplification}
    Let $0\leq c(n),s(n),\delta(n)\leq 1$ be poly-time computable functions such that $c(n)-s(n)\geq1/\poly(n)$. For any polynomial $p$,
    \begin{align*}
        \rsBQP_\delta[c,s]&\subseteq\rsBQP_\delta[1-2^{-p},2^{-p}] \\
        \rsQCMA_\delta[c,s]&\subseteq\rsQCMA_\delta[1-2^{-p},2^{-p}]\\
        \rsQMA_\delta[c,s]&\subseteq\rsQMA_\delta[1-2^{-p},2^{-p}]\text.
    \end{align*}
\end{prop}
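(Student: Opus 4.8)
The plan is to adapt the textbook parallel-repetition-and-threshold-voting amplification to the state-synthesis setting, following \cite{DLLM23}, and to verify that every step leaves the output channel — and hence the target set $xR$ and the synthesis error $\delta$ — untouched. Fix a relation $R$ together with a circuit family $(C_n)$ witnessing membership in $\rsBQP_\delta[c,s]$, and let $q$ be a polynomial with $c(n)-s(n)\ge 1/q(n)$. Set the threshold $t(n)=\tfrac12\big(c(n)+s(n)\big)$ and the number of repetitions $N(n)=\Theta\big(p(n)\,q(n)^2\big)$. The amplified circuit $D_n$ runs $N$ independent copies of $C_n$ on fresh registers, measures the acceptance bits $b_1,\dots,b_N$, and accepts iff $\sum_i b_i\ge tN$; when it accepts it routes to the output channel the output register of the first accepting copy, i.e.\ register $i^\ast=\min\{i:b_i=1\}$, via controlled-swaps driven by the (now classical) bits. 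Since $N$ is polynomial and the counting and routing are standard reversible subcircuits, $(D_n)$ is polynomial-size and uniform.

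The acceptance statistics are the usual classical calculation: by a Chernoff bound, when $xR\neq\emptyset$ the count exceeds $tN$ except with probability $2^{-\Omega(N/q(n)^2)}\le 2^{-p}$, and when $xR=\emptyset$ it stays below $tN$ except with probability $\le 2^{-p}$, for the stated $N$. The only genuinely new point is the output. Conditioned on the new acceptance event and on $i^\ast=j$, all the imposed constraints other than $b_j=1$ refer only to copies different from $j$, which are independent of copy $j$; hence register $j$ carries exactly the state $C_x^\outacc$. Averaging over $j$ therefore gives $D_x^\outacc=C_x^\outacc$ as density matrices, so $\td\big(D_x^\outacc,xR\big)=\td\big(C_x^\outacc,xR\big)\le\delta(k_R(|x|))$. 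In particular neither $xR$ nor $\delta$ changes — this is precisely the target-preserving feature that makes the argument indifferent to $|xR|$, so it applies verbatim to relations rather than to single target states.

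For $\rsQCMA$ and $\rsQMA$ the same circuit is used, now fed $N$ copies of the witness. In the completeness case the honest witness $\ket{\psi}^{\otimes N}$ makes the $N$ runs independent, so the Chernoff estimate and the output computation go through unchanged. I expect the main obstacle to be soundness for $\rsQMA$: a cheating prover may supply a state entangled across the $N$ witness registers, so the runs are no longer independent and the elementary Chernoff bound does not apply. Here I would invoke the standard quantum amplification argument of \cite{DLLM23} (a convexity/operator-inequality bound on the threshold-accept probability in terms of the single-copy acceptance probability, of the Marriott--Watrous type), which controls the accept probability of $D_x$ against any, possibly entangled, witness, using that every single-copy witness accepting with probability $>s$ already yields output within $\delta$; crucially this bound refers only to acceptance and never touches the routed register, so $D_x^\outacc=C_x^\outacc$ still holds. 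For $\rsQCMA$ the witnesses are classical strings, so copying is unproblematic and the independent-runs analysis suffices directly.

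Finally, one checks that no step used $|xR|=1$: the acceptance analysis is purely classical, and the output analysis only uses that $C_x^\outacc$ is within $\delta$ of the \emph{set} $xR$. Hence the amplification of \cite{DLLM23} transfers directly to the relational classes, yielding the three claimed inclusions.
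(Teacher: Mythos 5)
Your construction and analysis for $\rsBQP_\delta$ are correct and are in the spirit of what the paper intends (the paper's own ``proof'' is a single sentence deferring to \cite{DLLM23} on the grounds that the amplification there preserves the target state): since a yes-instance circuit has a single conditional output $C_x^\outacc$, routing the first accepting copy gives $D_x^\outacc=C_x^\outacc$ exactly, and nothing about the structure of $xR$ is used. The gap is in the soundness of the amplified verifier for $\rsQCMA$ and $\rsQMA$. The soundness condition quantifies over \emph{all} witnesses of the amplified circuit, including a tuple of distinct classical strings $(w_1,\dots,w_N)$ (or an entangled state). Your closing claim that ``the output analysis only uses that $C_x^\outacc$ is within $\delta$ of the \emph{set} $xR$'' is exactly where the argument breaks: the conditional output of $D_x$ is a mixture $\sum_j q_j\, C_x^\outacc(\ket{w_j})$, and even if every component is within $\delta$ of the set $xR$, the components may be close to \emph{different} elements of $xR$, in which case the mixture can be far from every element of $xR$ --- the map $\rho\mapsto\td(\rho,S)=\min_{\sigma\in S}\td(\rho,\sigma)$ is not convex for a non-convex set $S$. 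Concretely, take $xR=\{\dm{0},\dm{1}\}$ and $\delta=0$, with two classical witnesses $w,w'$ each accepted with probability $c$ and yielding conditional outputs $\dm{0}$ and $\dm{1}$ respectively; the tuple witness $(w,w',w,w',\dots)$ makes $D_x$ accept with probability close to $1$ while $D_x^\outacc$ is a nontrivial mixture of $\dm{0}$ and $\dm{1}$, hence at constant distance from $xR$, violating soundness at synthesis error $0$.

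For $\rsQCMA$ this is repairable: let $D_n$ first check that the $N$ classical witness blocks are identical and reject otherwise; then all runs are i.i.d.\ on a single witness $w$, $D_x^\outacc=C_x^\outacc(\ket{w})$ exactly, and the single-copy soundness transfers. For $\rsQMA$ no such equality test is available for quantum witnesses, and the step of the \cite{DLLM23} soundness argument that you invoke is precisely a convexity bound of the form $\td\bigl(\sum_j q_j\rho_j,\sigma_x\bigr)\le\sum_j q_j\,\td(\rho_j,\sigma_x)$ against the \emph{unique} target $\sigma_x$ --- it genuinely uses $|xR|=1$. Extending it to arbitrary target sets requires an additional idea (or an additional hypothesis such as convexity of $xR$); neither your sketch nor the paper's one-sentence proof supplies it, so at minimum this point should be flagged rather than asserted to be automatic.
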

Since we have an amplification for completeness and soundness, having completeness $2/3$ and soundness $1/3$ is equivalent to having completeness $c(n)$ and soundness $s(n)$ with $c(n)-s(n)\geq1/\poly(n)$. We thus define  
\begin{align*}
    \rsBQP_\delta&=\rsBQP_\delta[2/3,1/3],\\
    \rsQCMA_\delta&=\rsQCMA_\delta[2/3,1/3],\\
    \rsQMA_\delta&=\rsQMA_\delta[2/3,1/3].
\end{align*}

A gap amplification result for the class $\rsR$ is also easy to show:
\begin{prop}[Gap amplification]
    \label{amplification-R}
    Let $0\leq c(n),s(n),\delta(n)\leq 1$ be computable functions such that $c(n)>s(n)$. For any computable function $\gamma(n)>0$,
    \[\rsR_\delta[c,s]\subseteq\rsR_\delta[1-\gamma,\gamma]\text.\]
\end{prop}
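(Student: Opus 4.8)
The plan is to use the standard majority-vote amplification, exactly as in the proof of \Cref{amplification}, and to check that it goes through in the unbounded-power setting under the weaker hypotheses. Given $R\in\rsR_\delta[c,s]$ witnessed by a machine $M$ producing circuits $(C_n)$, I would build a new machine $M'$ that on input $1^n$ first runs $M(1^n)$ to obtain $C_n$, then computes an integer $k=k(n)$ (specified below), and finally outputs a circuit $C'_n$ that runs $k$ independent copies of $C_n$ in parallel, sets its acceptance bit to the majority vote of the $k$ individual acceptance bits, and routes to its output channel the output register of the first copy whose acceptance bit equals $1$ (this copy exists whenever $C'_n$ accepts, since acceptance forces the count of accepting copies to be positive). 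Since $M$ halts and $k(n)$ is finite for every $n$, $M'$ halts; since $\rsR$ places no bound on circuit size, the fact that $k(n)$ may grow arbitrarily fast is harmless. Note also that $C'_n$ outputs $k_R(n)$ qubits, so the relevant threshold $\delta(k_R(n))$ is unchanged.

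For the acceptance probabilities, set the midpoint $m=(c(n)+s(n))/2$ and let $C'_n$ accept iff the number of accepting copies is at least $mk$. Each copy accepts independently with the same probability $p$, with $p\geq c(n)$ in the completeness case and $p\leq s(n)$ in the soundness case. Hoeffding's inequality then bounds the probability of crossing the threshold in the wrong direction by $\exp(-k(c(n)-s(n))^2/2)$ in both cases, so taking $k(n)=\lceil 2\ln(1/\gamma(n))/(c(n)-s(n))^2\rceil$ makes this at most $\gamma(n)$, yielding completeness $\geq 1-\gamma(n)$ and soundness $\leq\gamma(n)$. The quantity $k(n)$ is well defined and computable because $c(n)>s(n)$ and $\gamma(n)>0$ for every $n$; crucially, unlike the polynomial-size setting of \Cref{amplification}, we need neither $c-s\geq 1/\poly$ nor $\gamma=2^{-\poly}$, which is exactly why the unbounded class tolerates an arbitrary computable $\gamma$.

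The delicate step --- and the one I expect to be the only real obstacle --- is checking that the synthesis error is preserved, i.e.\ that $\td({C'_x}^{\outacc},xR)\leq\delta(k_R(n))$ still holds. The tempting worry is that the output conditioned on acceptance is a mixture of the outputs of the various accepting copies, and that this mixture could drift away from $xR$ because $xR=S_x$ need not be convex. I would dispose of this as follows. All $k$ copies are identical and independent, so conditioned on the event that copy $j$ accepts, the reduced state of copy $j$'s output register is exactly the single fixed density matrix $\tau:=C_x^{\outacc}$, independently of the acceptance bits of the other copies. Conditioning additionally on ``$j$ is the first accepting copy'' and on ``the majority accepts'' only constrains the \emph{other} copies' acceptance bits, which are independent of copy $j$'s output; hence in every branch the routed output register is in state $\tau$. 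Because the branches are indexed by orthogonal configurations of the (traced-out) individual acceptance qubits, there is no residual coherence between them, and the output of $C'_n$ conditioned on its acceptance bit being $1$ is therefore exactly $\tau$, not a genuine mixture. Consequently $\td({C'_x}^{\outacc},xR)=\td(\tau,xR)=\td(C_x^{\outacc},xR)\leq\delta(k_R(n))$, the convexity concern never arises, and we conclude $R\in\rsR_\delta[1-\gamma,\gamma]$.
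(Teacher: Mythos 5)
Your proof is correct, but it takes a genuinely different route from the paper's. The paper amplifies sequentially by repeat-until-success: run $C_n$ up to some computable number $T(n)$ of times, stop and output the output register of the first accepting run, and declare $xR=\emptyset$ if no run accepts. That scheme makes the synthesis-error analysis immediate --- the output conditioned on acceptance is exactly $C_x^\outacc$ of a single run --- which is precisely the point your third paragraph works to establish for the parallel majority-vote construction; your independence argument (the routed register's state, conditioned on events that only constrain the \emph{other} copies' acceptance bits, is still $\tau=C_x^\outacc$, so the apparent mixture collapses to $\tau$ and non-convexity of $xR$ never matters) is the right one and does close that gap. What your version buys in exchange is a cleaner treatment of soundness: with repeat-until-success the acceptance probability on $xR=\emptyset$ is $1-(1-s)^{T}$, which \emph{grows} with the number of repetitions, so the paper's one-line sketch implicitly needs an additional majority-vote step on the acceptance bit (or an assumption that $s$ is already small) to drive the soundness below $\gamma$; your threshold test at the midpoint $m=(c+s)/2$ handles completeness and soundness symmetrically via Hoeffding and requires no such patch. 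Both constructions exploit the same feature of $\rsR$, namely that $C'_n$ has no size bound, so $k(n)$ (resp.\ $T(n)$) may be any computable function of $n$; and both leave $\delta$ untouched. (A minor pedantic point: the ceiling of a computable real need not be computable, but any computable integer upper bound on $2\ln(1/\gamma(n))/(c(n)-s(n))^2$ serves equally well, so this is harmless.)
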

\begin{proof}
    The amplification is very similar to the standard amplification for decision circuits by repetition: We repeatedly apply the synthesis circuit until we get $C_{x}^\acc=1$. As soon at this happens, we stop and output the output state of the last repetition.  If we do not get $C_{x}^\acc=1$ after a specified number of interactions (depending on $c$, $s$ and $\gamma$), we decide that $xR=\emptyset$ (note that there is no need to output a quantum state in this case).
\end{proof}

\section{High synthesis error regime}\label{sec:high}

State synthesis classes defined in Section \ref{sec:definitions} are closely related to decision languages. In \Cref{prop:equivalence-state-decision} below we show a basic relationship between these two notions when the synthesis error is close to 1. While for concreteness we focus on the relationship between the classes $\BQP$ and $\rsBQP_\delta$, the results proved in this section remains true for other complexity classes (e.g., $\QMA$ and $\rsQMA_\delta$, $\QCMA$ and $\rsQCMA_\delta$, or $\QIP$ and $\rsQIP_\delta$) as well.

We  start with the following lemma, which holds for any $\delta$. 
\begin{lemma}
    \label{lem:equivalence-state-decision}
    For any $\delta:\bbN\to[0,1]$ and any state synthesis relation $R$, if $R\in\rsBQP_\delta$ then $L_R\in\BQP$.
\end{lemma}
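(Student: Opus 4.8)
The plan is to show that membership of a state synthesis relation $R$ in $\rsBQP_\delta$ gives us, essentially for free, a $\BQP$ algorithm for the promise language $L_R=\Lyes_R\cup\Lno_R$. The key observation is that the acceptance bit $C_x^\acc$ of the synthesis circuit already encodes the decision information we need: by the completeness condition, whenever $x\in\Lyes_R$ (i.e.\ $xR\neq\emptyset$) the circuit accepts with probability at least $c(|x|)$, and by the soundness condition, whenever $x\in\Lno_R$ (i.e.\ $xR=\emptyset$) the circuit accepts with probability at most $s(|x|)$. Crucially, the synthesis error $\delta$ plays no role in the acceptance probabilities themselves --- it only constrains the output channel in the completeness case --- which is why the lemma holds for \emph{any} $\delta$.

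Concretely, first I would invoke \Cref{amplification} to assume without loss of generality that the circuit family $(C_n)_{n\in\bbN}$ witnessing $R\in\rsBQP_\delta$ has completeness $2/3$ and soundness $1/3$ (recall $\rsBQP_\delta=\rsBQP_\delta[2/3,1/3]$). Then I would define a $\BQP$ decision procedure for $L_R$ as follows: on input $x$, construct $C_{|x|}$ using the uniformity Turing machine, run it on $\ket{x}$, simply ignore the output channel $C_x^\out$ entirely, and accept if and only if the acceptance bit $C_x^\acc$ equals $1$. Since $(C_n)$ is a uniform family of polynomial-size quantum circuits, this decision procedure is itself a uniform polynomial-size quantum circuit, so it defines a $\BQP$ machine.

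It remains to verify correctness on the promise. For $x\in\Lyes_R$ we have $xR\neq\emptyset$, so completeness gives $\Pr(C_x^\acc=1)\geq 2/3$; for $x\in\Lno_R$ we have $xR=\emptyset$, so soundness gives $\Pr(C_x^\acc=1)\leq 1/3$. These are exactly the acceptance/rejection bounds required for membership in $\BQP$ as a promise problem, so $L_R\in\BQP$.

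I do not anticipate a serious obstacle here, as the argument is essentially definitional. The only point requiring a small amount of care is to be explicit that the output state and hence the parameter $\delta$ are irrelevant to deciding $L_R$: discarding the output channel is legitimate because we only need the acceptance bit, and the completeness/soundness gap on that bit survives untouched. One should also note that the same proof transfers verbatim to the corresponding witness-based classes (replacing the acceptance bound by the existence of a good witness on $\Lyes_R$ and the soundness bound holding for all witnesses on $\Lno_R$), which justifies the remark preceding the lemma that the result extends to $\QMA$, $\QCMA$, and similar classes.
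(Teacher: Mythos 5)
Your proposal is correct and follows essentially the same argument as the paper: discard the output channel, keep only the acceptance bit, and observe that completeness and soundness directly give the $2/3$ vs.\ $1/3$ acceptance gap required for $L_R\in\BQP$ (the appeal to \Cref{amplification} is harmless but unnecessary, since $\rsBQP_\delta$ is defined as $\rsBQP_\delta[2/3,1/3]$).
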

\begin{proof}
    Take $R\in\rsBQP_{\delta}$. Let $(C_n)_{n\in\bbN}$ denote the family of circuits from \Cref{def-sBQP}. By ignoring the output state of the circuits and considering only their acceptance qubit, they become decision circuits that have acceptance probability $\Pr(C_x^\acc=1)$.
    If $x\in \Lyes_R$ then $\Pr(C_x^\acc=1)\geq 2/3$ by completeness as a state synthesis circuit.
    If $x\in\Lno_R$, which means that $xR=\emptyset$, then $\Pr(C_x^\acc=1)\leq 1/3$ by soundness as a state synthesis circuit. Thus $L_R\in\BQP$.
\end{proof}

Next, we show a tight relationship between quantum state synthesis classes and decision complexity classes when $\delta=1-2^{-n}$, i.e., when we allow exponentially small fidelity between the output and the state we want to synthesize. The idea is to generate the same maximally mixed state on any input.
\begin{thm}
    \label{prop:equivalence-state-decision}
    Consider the function $\delta_0\colon n\mapsto1-2^{-n}$. Then for any state synthesis relation $R$, $R\in\rsBQP_{\delta_0}$ iff $L_R\in\BQP$.
\end{thm}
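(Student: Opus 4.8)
The forward direction is exactly \Cref{lem:equivalence-state-decision} applied with $\delta=\delta_0$, so it remains to prove that $L_R\in\BQP$ implies $R\in\rsBQP_{\delta_0}$. The plan is to build a synthesis circuit whose acceptance qubit decides $L_R$ and whose output register simply carries the maximally mixed state, exploiting the fact that the maximally mixed state on $k$ qubits lies within trace distance $1-2^{-k}$ of \emph{every} density matrix. Concretely, let $(D_n)_{n\in\bbN}$ be the uniform family of polynomial-size decision circuits witnessing $L_R\in\BQP$, so that $\Pr(D_x\text{ accepts})\geq 2/3$ for $x\in\Lyes_R$ and $\Pr(D_x\text{ accepts})\leq 1/3$ for $x\in\Lno_R$. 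I would define $C_x$ to run $D_x$ to produce the acceptance bit and, on a separate output register of $k_R(|x|)$ qubits, to prepare the maximally mixed state $I/2^{k_R(|x|)}$ (for instance by creating $k_R(|x|)$ Bell pairs and keeping only one half of each), independently of the acceptance bit, so that $C_x^\outacc = I/2^{k_R(|x|)}$ regardless of the measurement outcome.

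The crux is the elementary estimate that for any density matrix $\rho$ on a space of dimension $d$ one has $\td(\rho,I/d)\leq 1-1/d$. I would prove this by writing $\td(\rho,I/d)=\frac12\sum_i\lvert\lambda_i-1/d\rvert$, where the $\lambda_i$ are the eigenvalues of $\rho$; this is a convex function of the eigenvalue vector on the probability simplex, hence maximized at a vertex of the simplex (a pure state), where it evaluates to $\frac12\big((1-1/d)+(d-1)/d\big)=1-1/d$. Taking $d=2^{k_R(|x|)}$ gives $1-1/d=1-2^{-k_R(|x|)}=\delta_0(k_R(|x|))$, so the synthesis error threshold is matched exactly, and it is matched \emph{tightly} precisely because $xR$ may contain pure states.

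Verifying membership is then routine. When $xR\neq\emptyset$ (i.e.\ $x\in\Lyes_R$), completeness holds because $\Pr(C_x^\acc=1)\geq 2/3$ and $\td(C_x^\outacc,xR)=\min_{\sigma\in xR}\td(I/2^{k_R(|x|)},\sigma)\leq\delta_0(k_R(|x|))$ by the estimate above; when $xR=\emptyset$ (i.e.\ $x\in\Lno_R$), soundness holds because $\Pr(C_x^\acc=1)\leq 1/3$. Hence $R\in\rsBQP_{\delta_0}[2/3,1/3]=\rsBQP_{\delta_0}$. The family $(C_n)_{n\in\bbN}$ is uniform and of polynomial size provided $k_R$ is polynomially bounded and poly-time computable, which I take to be part of the standing assumptions on state synthesis relations (it is in any case necessary even for $R$ to be a candidate member of $\rsBQP$, since a poly-size circuit cannot output super-polynomially many qubits). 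I expect the only genuine subtlety to be the trace-distance bound, and in particular the observation that its tightness at pure states is exactly what forces the threshold $1-2^{-k}$ rather than anything smaller; everything else is bookkeeping.
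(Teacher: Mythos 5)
Your proof is correct and follows essentially the same route as the paper's: the forward direction cites \Cref{lem:equivalence-state-decision}, and the backward direction attaches a maximally mixed output register to a $\BQP$ decision circuit for $L_R$ and uses the bound $\td(\rho, I/2^k)\leq 1-2^{-k}$. The only difference is that you spell out the convexity argument for that trace-distance bound, which the paper simply asserts.
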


\begin{proof}
    From \Cref{lem:equivalence-state-decision} we immediately get that $R\in\rsBQP_{\delta_0}$ implies $L_R\in\BQP$.

    Now suppose that $L_R\in\BQP$ and let $(C_n)_{n\in\bbN}$ be a uniform family of circuits recognizing $L_R$ with completeness $2/3$ and soundness $1/3$. A maximally mixed state $\rho_n$ on $k_R(n)$ qubits can be synthesized by a uniform family of polynomial-size circuits because $k_R\in\poly$. Since $\rho_n$ is at distance at most $1-2^{-k_R(n)}$ from any other density matrix, the circuit $C'_n$ that outputs $C_n^{\prime\acc}=C_n^\acc$ and $C_n^{\prime\out}=\rho_n$ synthesizes $R$ with completeness $2/3$, soundness $1/3$ and error $\delta_0$. Thus $R\in\rsBQP_{\delta_0}$.
\end{proof}

\section{Relationship between decision and state synthesis classes}
In this section we investigate the relationship between proving separations for quantum state synthesis classes and proving separations for decision complexity classes. The results of this section hold for any value of the synthesis error parameter $\delta$. Our main result is \Cref{prop:guess-classical-witness}, which shows that a separation for quantum state synthesis classes can be used to prove a separation for decision complexity classes.

First, as a consequence of \Cref{lem:equivalence-state-decision}, we show the following result:

\begin{prop}
    \label{cor:no-collapse}

    If $\BQP\neq\QMA$ then 
    \[
    \rsQMA_\delta\not\subseteq\rsBQP_{\delta'}
    \]
    holds for any $\delta,\delta'\in[0,1]$.
\end{prop}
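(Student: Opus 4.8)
The plan is to prove the contrapositive: I would assume that $\rsQMA_\delta\subseteq\rsBQP_{\delta'}$ holds for some particular pair $\delta,\delta'$, and derive that $\BQP=\QMA$. The key bridge is \Cref{lem:equivalence-state-decision}, which tells us that membership of a state synthesis relation in $\rsBQP_{\delta'}$ forces its associated language into $\BQP$, \emph{regardless of the value of $\delta'$}. So the whole argument reduces to showing that an arbitrary language in $\QMA$ can be encoded as the underlying language $L_R$ of a state synthesis relation $R$ that lives in $\rsQMA_\delta$. Since the containment $\BQP\subseteq\QMA$ is trivial, establishing $\QMA\subseteq\BQP$ suffices to conclude $\BQP=\QMA$.

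Concretely, let $L=(\Lyes,\Lno)$ be an arbitrary $\QMA$ promise problem. I would build a state synthesis relation $R$ whose decision language $L_R$ equals $L$ by attaching a trivial target state to every yes-instance. The natural choice is to set $k_R(n)$ to be a single qubit (or any fixed small number) and to let $S_x$ be the \emph{entire} set $\mathcal{O}_{|x|}$ of density matrices for each $x\in\Lyes$; that is, $xR$ is the set of all density matrices on $k_R(|x|)$ qubits. With this choice the output-state requirement in \Cref{def-sQMA} becomes vacuous: for any output state, its trace distance to $xR$ is $0\le\delta(k_R(|x|))$, so the soundness clause involving the output channel can never be triggered. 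Thus the state synthesis verifier only needs to replicate the $\QMA$ verifier on the acceptance bit, ignore the output channel entirely (or output a fixed state), and it will satisfy the completeness and soundness conditions of $\rsQMA_\delta$ with the same $c=2/3$, $s=1/3$. Hence $R\in\rsQMA_\delta$ for every $\delta$, while $L_R=L$ by construction.

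Chaining the two facts gives the result. Under the assumption $\rsQMA_\delta\subseteq\rsBQP_{\delta'}$, the relation $R$ just constructed lies in $\rsBQP_{\delta'}$, and then \Cref{lem:equivalence-state-decision} yields $L_R=L\in\BQP$. Since $L$ was an arbitrary $\QMA$ problem, $\QMA\subseteq\BQP$, and with the trivial reverse inclusion we obtain $\BQP=\QMA$, contradicting the hypothesis $\BQP\neq\QMA$. This establishes $\rsQMA_\delta\not\subseteq\rsBQP_{\delta'}$.

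The step I expect to require the most care is the construction of $R$ with $xR=\mathcal{O}_{|x|}$ and checking that it genuinely satisfies \Cref{def-sQMA}: one must confirm that taking $S_x$ to be all density matrices is legitimate (it is a non-empty subset of $\mathcal{O}_{|x|}$, as required by the state synthesis relation definition) and that the output-state soundness condition is never violated because the trace distance to the full set is identically zero. The quantifier structure of the soundness clause in \Cref{def-sQMA}—which only imposes a constraint on witnesses whose output is \emph{far} from $xR$—is exactly what makes the all-states target trivialize the synthesis burden, so the only real content left is transferring the acceptance-probability guarantees of the original $\QMA$ verifier, which is immediate.
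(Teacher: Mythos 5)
Your proof is correct and follows essentially the same route as the paper: prove the contrapositive by encoding an arbitrary $\QMA$ language as a trivially synthesizable state synthesis relation and then invoke \Cref{lem:equivalence-state-decision}. The only difference is the gadget: the paper takes the singleton target $xR=\{\ketbra{1}{1}\}$ (realized by CNOT-copying the acceptance bit to the output channel), whereas you take $xR=\mathcal{O}_{|x|}$ to make the trace-distance condition vacuous; both are legitimate under \Cref{def-sQMA}, though the paper's version has the minor advantage of also working for the single-target-state definitions of prior work.
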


\begin{proof}
    Suppose that there exist $\delta,\delta'$ such that the inclusion $\rsQMA_\delta\subseteq\rsBQP_{\delta'}$ holds.
    For $L=(\Lyes,\Lno)\in\QMA$, there exists a family of circuits $(C_n)_{n\in\bbN}$ that takes a quantum witness and recognizes $L$ with completeness $2/3$ and soundness $1/3$. For each $n\in\bbN$ we can build a circuit $C'_n$ that introduces a 1-qubit output channel and ``copies'' the contents of the acceptance qubit to the output channel using a CNOT gate:

    \begin{center}
    \begin{quantikz}
        & \gate[2,nwires={2},bundle={1}]{C_n}\gategroup[wires=3,steps=3,style={inner sep=2mm, dashed, line width=0.2mm, rounded corners}]{circuit $C'_n$} & \qwbundle[alternate]{} & \qwbundle[alternate]{} & \qwbundle[alternate]{} & &  \\
        &  & \ctrl{1} & \meter{} & \cw{}\rstick{$\acc$} & & \\
        \lstick{\ket{0}} & \qw & \targ{} & \qw & \qw\rstick{$\out$} &&
    \end{quantikz}
    \end{center}

    Define the relation $R$ by $xR=\{\ketbra{1}{1}\}$ if $x\in\Lyes$ and $xR=\emptyset$ if $x\in\Lno$. Since $(C'_n)_{n\in\bbN}$ synthesizes $R$, we get
    \[
    R\in\rsQMA_0\subseteq\rsQMA_\delta\subseteq\rsBQP_{\delta'}.
    \]
    By \Cref{lem:equivalence-state-decision}, we get $L\in\BQP$ and thus $\QMA\subseteq\BQP$.
\end{proof}

By replacing the quantum witness by a classical witness in \Cref{cor:no-collapse} we similarly obtain the following result:  
\begin{prop}
    \label{cor:no-collapse2}

    If $\BQP\neq\QCMA$ then 
    \[
    \rsQCMA_\delta\not\subseteq\rsBQP_{\delta'}.
    \]
    holds for any $\delta,\delta'\in[0,1]$.
\end{prop}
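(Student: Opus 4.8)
The plan is to prove the contrapositive, following verbatim the structure of the proof of \Cref{cor:no-collapse} but replacing the quantum witness by a classical one. Suppose there exist $\delta,\delta'\in[0,1]$ such that $\rsQCMA_\delta\subseteq\rsBQP_{\delta'}$; I will deduce $\QCMA\subseteq\BQP$, which together with the trivial inclusion $\BQP\subseteq\QCMA$ contradicts $\BQP\neq\QCMA$. So take an arbitrary promise language $L=(\Lyes,\Lno)\in\QCMA$ and let $(C_n)_{n\in\bbN}$ be a uniform family of polynomial-size circuits taking a classical witness that recognizes $L$ with completeness $2/3$ and soundness $1/3$. As in \Cref{cor:no-collapse}, I would append to each $C_n$ a single output qubit initialized to $\ket{0}$ together with a CNOT controlled by the acceptance qubit, producing a circuit $C'_n$ that copies the acceptance bit onto the output channel, and then define the relation $R$ by $xR=\{\ketbra{1}{1}\}$ for $x\in\Lyes$ and $xR=\emptyset$ for $x\in\Lno$, so that $L_R=L$.

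The only step requiring genuine verification is that $(C'_n)_{n\in\bbN}$ witnesses $R\in\rsQCMA_0[2/3,1/3]$. Completeness is inherited directly from $C_n$: for $x\in\Lyes$ the classical witness making $C_n$ accept with probability at least $2/3$ does the same for $C'_n$. For soundness, the crucial observation is that whenever the acceptance bit equals $1$ the CNOT forces the output qubit into $\ketbra{1}{1}$, so for every classical witness $\ket{\psi}$ one has $C_x^{\outacc}(\ket{\psi})=\ketbra{1}{1}$, and hence $\td(C_x^{\outacc}(\ket{\psi}),xR)=0$ whenever $xR\neq\emptyset$. Consequently the distance-based clause of the \rsQCMA{} soundness condition is vacuously satisfied (its hypothesis $\td(\cdot,xR)>0$ never holds), and the remaining clause—rejection with probability at least $2/3$ when $xR=\emptyset$, i.e. when $x\in\Lno$—is precisely the soundness of $C_n$ as a $\QCMA$ verifier. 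Thus $R\in\rsQCMA_0\subseteq\rsQCMA_\delta\subseteq\rsBQP_{\delta'}$, where the first inclusion uses the fact that the class is monotone in $\delta$ (enlarging $\delta$ only weakens the soundness constraint).

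Finally, applying \Cref{lem:equivalence-state-decision} to $R\in\rsBQP_{\delta'}$ yields $L_R\in\BQP$, that is $L\in\BQP$, completing the contrapositive. I do not anticipate a real obstacle here, since this is the literal classical-witness transcription of \Cref{cor:no-collapse}; the single point deserving care is the claim that the output state conditioned on acceptance is always exactly $\ketbra{1}{1}$, as it is this fact that collapses the \rsQCMA{} soundness requirement down to the ordinary $\QCMA$ soundness condition on the acceptance bit and makes the construction valid already at synthesis error $\delta=0$.
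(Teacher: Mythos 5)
Your proposal is correct and is exactly the argument the paper intends: the paper proves \Cref{cor:no-collapse2} simply by noting that the proof of \Cref{cor:no-collapse} goes through verbatim with a classical witness in place of the quantum one, which is precisely what you carry out (with the added, and accurate, verification that the CNOT construction places $R$ in $\rsQCMA_0$ because the output conditioned on acceptance is always exactly $\ketbra{1}{1}$).
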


Now using a technique similar to the proof that $\P=\NP$ iff $\FP=\FNP$ \cite{BG94} we are able to show the following converse statement, which is the main result of this section.
\begin{thm}
    \label{prop:guess-classical-witness}
    If there exist some $\delta\in[0,1]$ and some polynomial $q$ such that
    \[
    \rsQCMA_\delta\not\subseteq\rsBQP_{\delta+1/q},
    \]
  then $\BQP\neq\QCMA$.
\end{thm}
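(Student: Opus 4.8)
The plan is to prove the contrapositive: assuming $\BQP=\QCMA$, I will show that $\rsQCMA_\delta\subseteq\rsBQP_{\delta+1/q}$ for every $\delta$ and every polynomial $q$. The structure mirrors the classical equivalence $\P=\NP\iff\FP=\FNP$: the decision-class collapse gives us, for free, the ability to \emph{decide} certain predicates, and the real work is to leverage this decision power to algorithmically \emph{find} a good classical witness, which a $\BQP$ machine can then feed into the synthesis circuit.

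Here is how I would carry it out. Take $R\in\rsQCMA_\delta$ with witnessing circuit family $(C_n)$; by \Cref{amplification} I may assume completeness $1-2^{-p}$ and soundness $2^{-p}$ for a large polynomial $p$. The witness is a classical string $w$ of polynomial length $m=m(|x|)$. The key idea is that a $\BQP$ machine wants to reconstruct a good witness bit by bit, but to do so it must be able to answer queries of the form ``does there exist a completion of the current witness prefix that makes $C_x$ accept (with the output state still close to $xR$)?'' I would define a promise decision problem $\GI$ (guess-witness-style) whose instances are tuples $(x,u)$, where $u$ is a partial witness prefix, and whose YES/NO conditions encode whether some extension of $u$ leads to acceptance. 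This predicate is naturally in $\QCMA$: the full witness is the certificate, and the verifier runs $C_x$ on the completed witness and checks acceptance. Under the hypothesis $\BQP=\QCMA$, this predicate lies in $\BQP$, so a polynomial-time quantum algorithm can answer each prefix-extension query with bounded error. Iterating over the $m$ bits and using standard error-reduction so that all $m$ queries succeed simultaneously with high probability, the $\BQP$ machine produces, with high probability, a classical string $w^\ast$ that causes $C_x$ to accept. The synthesis circuit for $\rsBQP$ then computes $w^\ast$ internally and runs $C_x(w^\ast)$, outputting its synthesized state.

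The subtle point, and the reason the statement only gives $\rsBQP_{\delta+1/q}$ rather than $\rsBQP_\delta$, is the soundness condition of $\rsQCMA$: a witness that merely makes $C_x$ accept is only guaranteed to yield an output state within trace distance $\delta$ of $xR$ provided that the acceptance probability genuinely exceeds the soundness threshold. My search procedure commits to a single witness $w^\ast$ and runs the circuit once, so I must ensure the reconstructed $w^\ast$ is an \emph{accepting} witness in the strong sense (acceptance probability at least, say, $2/3$), not a marginal one. This is exactly where the loss of $1/q$ enters: I would define the reachability predicate with a slightly relaxed acceptance threshold (offset by $1/q$) so that every witness certified by the search is a genuinely good synthesizer, and by a continuity/averaging argument the resulting output state is within $\delta+1/q$ of $xR$. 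The main obstacle I anticipate is precisely the bookkeeping here — formulating the prefix-search predicate so that (i) it is honestly in $\QCMA$, (ii) the thresholds propagate correctly through the bit-by-bit reconstruction, and (iii) the final trace-distance bound accumulates to at most $\delta+1/q$ rather than something larger — since naive composition of the $m$ query errors and the single-shot synthesis could otherwise blow up the synthesis error beyond the allowed budget.
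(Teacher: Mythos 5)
Your proposal is correct and follows essentially the same route as the paper: proving the contrapositive, defining a prefix-extension promise problem in $\QCMA$ with acceptance thresholds that slide down by an inverse-polynomial amount per bit (so that the total slack over all $\ell$ bits stays within the $1/q$ budget), deciding it in $\BQP$ to reconstruct a witness bit by bit, and then using the $\rsQCMA$ soundness condition in an averaging argument to bound the trace distance of the resulting mixture by $\delta+1/q$. The bookkeeping you flag as the remaining obstacle is exactly what the paper's Lemma on $\GW_p$ and the function $f_p(\ell,n)=1-2^{-n}-\ell/p(n)^2$ carry out.
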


The proof of \Cref{{prop:guess-classical-witness}} will show the contrapositive: we will show that $\BQP=\QCMA$ implies that $\rsQCMA_\delta\subseteq\rsBQP_{\delta+1/q}$ holds for any $\delta\in[0,1]$ and any polynomial $q$.
In order to prove this statement, we first show (in \Cref{GW2} below) that if $\BQP=\QCMA$ then we can efficiently ``guess'' the witness of the circuit synthesizing a relation in $\rsQCMA$. For conciseness, we will write 
\[
    f_p(\ell,n)=1-2^{-n}-\frac{\ell}{p(n)^2}.
\]
for a polynomial $p:\bbN\to\bbN$ and any integers $\ell,n$.

\begin{prop}[Guessing a classical witness]
    \label{GW2}
    Let $R$ be a relation in the complexity class $\rsQCMA_\delta[1-2^{-n},2^{-n}]$ for some $\delta>0$, and $(C_n)_{n\in\bbN}$ be the corresponding family of quantum circuits synthesizing $R$. Let $p$ be a polynomial such that the circuit $C_n$ acts on less than $p(n)$ qubits, and $\ell(n)$ be the length of the classical witness it receives. If $\BQP=\QCMA$, then there exists a polynomial-time quantum algorithm that receives as input a string $x\in\Lyes_R$ and outputs with probability at least $(1-2^{-n})^{\ell(n)}$ a string $w\in\{0,1\}^{\ell(n)}$ such that 
    \[\Pr(C_x^\acc(\ket{w})=1)\geq f_p(\ell(n)+1,n)\] holds.
\end{prop}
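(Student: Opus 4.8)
The plan is to prove \Cref{GW2} by guessing the witness bit by bit, using the hypothesis $\BQP=\QCMA$ to decide, for each partial witness, whether it can be extended to a good full witness. Concretely, I would define a sequence of promise decision problems indexed by the prefix already chosen: given $x$ and a partial string $u\in\{0,1\}^{j}$, decide whether there exists a completion $w\in\{0,1\}^{\ell(n)}$ with $u$ as a prefix such that $\Pr(C_x^\acc(\ket{w})=1)$ is large. Each such problem is naturally in $\QCMA$: the verifier receives the remaining $\ell(n)-j$ bits as a classical witness, appends them to $u$, runs $C_x$ on $\ket{w}$, and accepts according to the acceptance bit. Since the original synthesis has completeness $1-2^{-n}$ and soundness $2^{-n}$, the gap is exponentially good, so these are valid $\QCMA$ promise problems (after a routine threshold choice); under $\BQP=\QCMA$ each is solvable by a polynomial-time quantum algorithm with bounded error.

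Next I would assemble these decision oracles into the greedy prefix-extension algorithm: starting from the empty prefix, at step $j$ query whether the current prefix $u$ extended by $0$ admits a good completion; if yes set the next bit to $0$, otherwise set it to $1$ (and in that case the extension-by-$1$ query must succeed, since by induction $u$ itself was extendable). After $\ell(n)$ steps we have produced a full witness $w$. The point of the threshold function $f_p(\ell,n)=1-2^{-n}-\ell/p(n)^2$ is to track how the acceptance guarantee degrades as the prefix grows: I would arrange the $\QCMA$ problems so that a ``yes'' answer at depth $j$ certifies a completion with acceptance probability at least $f_p(\ell(n)-j,n)$ or similar, and the final witness then achieves $\Pr(C_x^\acc(\ket{w})=1)\ge f_p(\ell(n)+1,n)$ as claimed. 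The slack of $1/p(n)^2$ per bit, across $\ell(n)\le p(n)$ bits, costs only an additive $O(1/p(n))$ overall, which is absorbed into the single extra unit in $f_p(\ell(n)+1,n)$.

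The delicate point, and the step I expect to be the main obstacle, is the error accounting for the $\BQP$ simulations of the $\QCMA$ oracles. Each of the $\ell(n)$ queries is answered by a bounded-error quantum algorithm, so a naive union bound over polynomially many calls would only give inverse-polynomial overall success, whereas the statement asks for success probability at least $(1-2^{-n})^{\ell(n)}$, i.e.\ essentially $1-\ell(n)2^{-n}$. The plan is to first amplify each $\BQP$ oracle so that its error per call is at most $2^{-n}$ (standard majority-vote amplification, costing a polynomial factor), so that the product of per-call success probabilities is at least $(1-2^{-n})^{\ell(n)}$; this is exactly the origin of that bound in the statement. I also need to verify that the promise is never violated during the greedy descent, i.e.\ that every prefix actually queried genuinely lies in the yes/no promise set of its $\QCMA$ problem so that the amplified $\BQP$ decider behaves as guaranteed; this follows from the invariant that the chosen prefix is always extendable, maintained by completeness of the original synthesis circuit on $x\in\Lyes_R$. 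Finally I would confirm the threshold bookkeeping closes, namely that the acceptance guarantee tracked through $f_p$ survives the last step to yield the stated $f_p(\ell(n)+1,n)$ bound.
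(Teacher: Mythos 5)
Your overall strategy is the same as the paper's: a greedy bit-by-bit construction of the witness, where each bit is chosen by solving (under $\BQP=\QCMA$) a $\QCMA$ promise problem asking whether the current prefix, extended by a fixed next bit, still admits a completion accepted with high probability; the thresholds decrease by $1/p(n)^2$ per level, and each $\BQP$ call is amplified to error $2^{-n}$, which is exactly where the $(1-2^{-n})^{\ell(n)}$ bound comes from.

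There is, however, one genuine gap, located precisely at the step you yourself flag as delicate. You claim that the promise of the extendability problem is never violated during the descent and that this follows from the invariant that the current prefix is extendable. This is false. The invariant tells you that the prefix $u$ admits \emph{some} completion with acceptance probability at least $f_p(i,n)$; the query, however, concerns only the completions of $u$ whose next bit is a fixed value $b$, and nothing prevents the best such completion from having acceptance probability strictly between the no-threshold $f_p(i+1,n)$ and the yes-threshold $f_p(i,n)$. In that case the instance lies outside the promise and the (amplified) $\BQP$ decider may answer arbitrarily. The correct argument, and the one the paper gives, is a case analysis: when the promise holds the answer is reliable and the invariant is maintained; when it is violated, \emph{either} answer is acceptable, because a violation means there simultaneously exists an $f_p(i+1,n)$-good completion with next bit $b$ and, by the invariant, an $f_p(i,n)$-good (hence $f_p(i+1,n)$-good) completion with next bit $1-b$. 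This case analysis is what the per-level slack of $1/p(n)^2$ is actually paying for; without it the induction does not close. A secondary, related slip: the promise gap of these extendability problems is the inverse-polynomial $1/p(n)^2$ between consecutive thresholds, not the exponential completeness--soundness gap of the synthesis circuit, which only enters in establishing the base case of the induction.
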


We use the following lemma to prove \Cref{GW2}.

\begin{lemma}
    \label{lemma:GW}
    For any polynomial $p:\bbN\to\bbN$, the promise language $\GW_{p}:=(\GWyes_p,\GWno_p)$ defined below is in $\QCMA$.
    \begin{align*}
        \GWyes_{p} &:= \left\{(C,x,w_0)\mid
            \begin{cases}
                C\text{ describes a quantum circuit taking }\leq p(|x|)\text{ qubits as input}\\
                |x|+|w_0|<p(|x|)\\
                \exists w, \Pr(C^\acc(\ket{x}\ket{w_01w})=1)
                \geq f_p(|w_0|,|x|)
            \end{cases}\right\} \\
        \GWno_{p} &:= \left\{(C,x,w_0)\mid
            \begin{cases}
                C\text{ describes a quantum circuit taking }\leq p(|x|)\text{ qubits as input}\\
                |x|+|w_0|<p(|x|)\\
                \forall w, \Pr(C^\acc(\ket{x}\ket{w_01w})=1)\leq 
                f_p(|w_0|+1,|x|)\text.
            \end{cases}\right\}
    \end{align*}
\end{lemma}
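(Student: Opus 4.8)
The plan is to give $\GW_p$ a direct $\QCMA$ verification procedure whose classical witness is exactly the completion string $w$ appearing in the definition. First I would specify the verifier $V$: on input $(C,x,w_0)$ together with a classical witness string $w$, it reads the description of $C$, determines its number $m\le p(|x|)$ of input qubits, and rejects unless $|x|+|w_0|+1+|w|=m$ (so that $w$ has the length needed to fill the input register). Otherwise it prepares the computational-basis state $\ket{x}\ket{w_01w}$, simulates $C$ gate by gate on it, and measures the acceptance qubit, accepting iff $C^\acc=1$. Since the description of $C$ is part of the input, this simulation runs in time polynomial in the total input length, so $V$ is a legal polynomial-time quantum verifier; the promise conditions ($C$ takes $\le p(|x|)$ qubits and $|x|+|w_0|<p(|x|)$) can also be checked in polynomial time.

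Completeness and soundness then read off immediately from the definition of $\GW_p$. If $(C,x,w_0)\in\GWyes_p$, the witnessing $w$ guaranteed by the definition makes $V$ accept with probability $\Pr(C^\acc(\ket{x}\ket{w_01w})=1)\ge f_p(|w_0|,|x|)$. If $(C,x,w_0)\in\GWno_p$, then for \emph{every} candidate $w$ of the correct length we have $\Pr(C^\acc(\ket{x}\ket{w_01w})=1)\le f_p(|w_0|+1,|x|)$, so no witness makes $V$ accept with probability exceeding $f_p(|w_0|+1,|x|)$. Hence $V$ is a $\QCMA$-type verifier with completeness threshold $f_p(|w_0|,|x|)$ and soundness threshold $f_p(|w_0|+1,|x|)$, and the separation between the two is exactly $f_p(|w_0|,|x|)-f_p(|w_0|+1,|x|)=1/p(|x|)^2$, an inverse polynomial.

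The one point requiring care, and really the only subtlety, is that both thresholds lie close to $1-2^{-|x|}$, i.e. near $1$, with only a $1/\poly$ gap between them; one therefore cannot use a single accept as the final test. To finish I would invoke the standard $\QCMA$ gap amplification: because the witness $w$ is classical it can be reused, so $V$ is run a polynomial number of times on the same $w$, the empirical acceptance frequency is compared to the efficiently computable midpoint $f_p(|w_0|+\tfrac12,|x|)$, and a Chernoff bound boosts completeness above $2/3$ and soundness below $1/3$. This yields $\GW_p\in\QCMA$. The expected main obstacle is thus confirming that the $1/p(|x|)^2$ separation survives with both thresholds pinned near $1$, so that thresholded amplification (rather than a bare single-shot test) legitimately applies.
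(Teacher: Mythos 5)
Your proof is correct and follows essentially the same route as the paper's: the same verifier that simulates $C$ on $\ket{x}\ket{w_01w}$ with $w$ as the classical witness, the same completeness and soundness thresholds $f_p(|w_0|,|x|)$ and $f_p(|w_0|+1,|x|)$, and the same appeal to the inverse-polynomial gap $1/p(|x|)^2$ to conclude membership in $\QCMA$. The only difference is that you spell out the threshold-based amplification step (repetition against the midpoint) that the paper leaves implicit.
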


\begin{proof}
    For any $n\in\bbN$, consider the following verification circuit. The circuit receives as input $(C,x,w_0)$ and $w$ as classical witness.
    It simulates $C(\ket{x}\ket{w_01w})$ and accepts iff this simulation accepts.\vspace{2mm}

    \noindent\textbf{Completeness.} If $(C,x,w_0)\in\GWyes_{p}$ then there exists $w$ such that 
    \[
        \Pr(C^\acc(\ket{x}\ket{w_01w})=1)\geq f_p(|w_0|,|x|)
    \]
    holds.\vspace{2mm}

    \noindent\textbf{Soundness.} If $(C,x,w_0)\in\GWno_{p}$, then for any $w$, the inequality
    \[
        \Pr(C^\acc(\ket{x}\ket{w_01w})=1)\leq f_p(|w_0|+1,|x|)
    \]
    holds. \vspace{2mm}

    Since $f_p(|w_0|,|x|)-f_p(|w_0|+1,|x|)$ is lower bounded by an inverse-polynomial function of the input length, we conclude that $\GW_p\in\QCMA$.
\end{proof}

We are now ready to give the proof of \Cref{GW2}.
\begin{proof}[Proof of \Cref{GW2}]
    Assume that $\BQP=\QCMA$. Let $\mathcal{A}$ be a polynomial-time quantum algorithm deciding $\GW_{p}\in\BQP[1-2^{-n},2^{-n}]$, where $\GW_{p}$ is defined in \Cref{lemma:GW}. Consider the following quantum algorithm that receives $x\in\Lyes_R$ as input. The algorithm constructs bit by bit a classical witness $w=w_1...w_{\ell(n)}$ by defining the bit $w_i$ as follows: if $\mathcal{A}$ on input $(C_n,x,w_1...w_{i-1})$ accepts then set $w_i=1$, otherwise set $w_i=0$. 

    This running time of this algorithm is polynomial. We now show its correctness. Consider a string $x\in \Lyes_R$. In the analysis below, we assume that $V$ does not make any error (i.e., always decides correctly membership in $\GW_{p}$ during the $\ell(n)$ iterations), which happens with probability at least $(1-2^{-n})^{\ell(n)}$.  
    
    For conciseness, for any $q\in[0,1]$ we say that a string $\bar{w}\in\{0,1\}^{\ell(n)}$ is a $q$-witness if 
    \[
        \Pr(C_x^\acc(\ket{\bar{w}}))\geq q
    \]
    holds. For conciseness again, we write below $f(i)$ instead of $f_p(i,n)$.
    
    We show by induction on $i$ that for each $i\in\{0,\ldots,\ell(n)\}$ the following property $\mathcal{P}_i$ holds at the end of the $i$th iteration (or at the very beginning of the algorithm for $i=0$): there exists an $f(i+1)$-witness starting with $w_1\ldots w_i$.  Property $P_{\ell(n)}$ then implies the correctness of our algorithm.
    
    Property $P_0$ is obviously true: from the completeness of $C_n$ we know that there exists at least one $f(0)$-witness. 


Assume now that the property $P_{i-1}$ is true for some $i\in\{1,\ldots,\ell(n)\}$, i.e., there exists an $f(i)$-witness starting with $w_1\ldots w_{i-1}$. If there exists an $f(i)$-witness starting with $w_1\ldots w_{i-1}1$ then $(C_n,x,w_1...w_{i-1})\in\GWyes_{p}$,  which means that $\mathcal{A}$ on input $(C_n,x,w_1...w_{i-1})$ accepts and we correctly set $w_i=1$. Otherwise there exists an $f(i)$-witness starting with $w_1\ldots w_{i-1}0$. If there is no $f(i+1)$-witness starting with $w_1\ldots w_{i-1}1$ then $(C_n,x,w_1...w_{i-1})\in\GWno_{p}$, which means that Algorithm $\mathcal{A}$ rejects and we correctly set $w_i=0$; otherwise the output of $\mathcal{A}$ (and the value of $w_{i}$) can be arbitrary, which is fine since in this case there exist both an $f(i+1)$-witness starting with $w_1\ldots w_{i-1}1$ and an $f(i)$-witness starting with $w_1\ldots w_{i-1}0$. Since $f(i+1)\geq f(i)$, an $f(i)$-witness is an $f(i+1)$-witness. In all cases Property $P_i$ is thus satisfied.
\end{proof}

We can now apply \Cref{GW2} to prove \Cref{prop:guess-classical-witness}.

\begin{proof}[Proof of \Cref{prop:guess-classical-witness}]
    We show the contrapositive: we show that $\BQP=\QCMA$ implies that for any $\delta\in[0,1]$ and any polynomial $q$, the class $\rsQCMA_\delta$ is included in $\rsBQP_{\delta+1/q}$.
    

    Assume that $\BQP=\QCMA$ and take any relation $R\in\rsQCMA_\delta[c,s]$ with $c(n)=1-2^{-n}$ and $s(n)=2^{-n}$.
    Let $(C_n)_{n\in\bbN}$ denote the circuit synthesizing $R$ with completeness $c$, soundness $s$ and synthesis error $\delta$, let $\ell(n)$ be the length of the classical witness $C_n$ receives and let $p_1(n)$ be the number of qubits that $C_n$ takes as input.
    Let $p$ be a polynomial such that $p(n)\geq\sqrt{2q(n)(\ell(n)+1)}$ and $p(n)\geq p_1(n)$ hold. Let $C'_n$ be the circuit obtained by first applying the circuit corresponding to the algorithm of \Cref{GW2} to guess a witness $w$ and then simulating $C_n(\ket{x}\ket{w})$. In the following, let $X$ be the random variable that gives the witness $w$ guessed by $C'_x$, and for conciseness let $d=\td(C_x^{\prime\outacc},xR)$, $d_w=\td(C_x^{\outacc}(\ket{w}),xR)$ and $\delta=\delta(k_R(n))$.\vspace{2mm}

    \noindent\textbf{Completeness.} Suppose that $xR\neq\emptyset$, i.e., $x\in\Lyes_R$. Then by \Cref{GW2} we obtain
    \begin{align*}
        \Pr(C_{x}^{\prime\acc}=1) &\geq (1-2^{-n})^{\ell(n)}\cdot  f_{p}(\ell(n)+1,n)
        \\&= \underset{\geq 1-\ell(n)2^{-n}}{\underbrace{(1-2^{-n})^{\ell(n)}}}\left(1-2^{-n}-\frac{\ell(n)+1}{p(n)^2}\right)
        \\&\geq 1-(\ell(n)+1)2^{-n}-\frac{\ell(n)+1}{p(n)^2}\\
        &\geq 1-2^{\log(\ell(n)+1)-n}-\frac{1}{2q(n)}
        =:c'(n)\text,
    \end{align*}
    where the last inequality holds since we chose a polynomial $p$ satisfying $p(n)\geq\sqrt{2q(n)(\ell(n)+1)}$.

    Denote $p_\delta=\Pr(d_w\leq\delta)=\sum_{d_w\leq\delta}\Pr(X=w)$. Since
    \begin{align*}
        c'(n)
          &\leq \Pr(C_x^{\prime\acc}=1)
        \\&=\sum_{d_w>\delta}\Pr(X=w)\underset{\leq s(n)}{\underbrace{\Pr(C_x^{\acc}(\ket{w})=1)}}+\sum_{d_w\leq\delta}\Pr(X=w)\underset{\leq1}{\underbrace{\Pr(C_x^{\acc}(\ket{w})=1)}}
        \\&\leq  s(n)+p_\delta\text,
    \end{align*}
    we have
    \begin{align*}
        d &\leq \sum_{w}\Pr(X=w)d_w
        \\&\leq \sum_{d_w\leq\delta}\Pr(X=w)\underset{\leq \delta}{\underbrace{d_w}}+\sum_{d_w>\delta}\Pr(X=w)\underset{\leq 1}{\underbrace{d_w}}
        \\&\leq \delta \underset{\leq 1}{\underbrace{p_\delta}} + 1-\underset{\geq c'(n)-s(n)}{\underbrace{p_\delta}}
        \\&\leq \delta+1-c'(n)+s(n)
        \\&\leq \delta+2^{\log(\ell(n)+1)-n}+\frac{1}{2q(n)}+2^{-n}
        \\&\leq \delta+\frac{1}{q(n)}
    \end{align*}
    when $2^{\log(\ell(n)+1)-n}+2^{-n}\leq \frac{1}{2q(n)}$, which holds when $n$ is large enough.

    \medskip\noindent\textbf{Soundness.} Suppose that $xR=\emptyset$, i.e., $x\in\Lno$. Then by soundness of $C_n$, whatever the witness $w$ guessed, the acceptance probability is small:
    \[\Pr\big(C_x^{\prime\acc}=1\big)= \sum_w\Pr(X=w)\underset{\leq s(n)}{\underbrace{\Pr(C_x^\acc(\ket{w})=1)}}\leq s(|x|)\text.\]

    \medskip{}Since $c'(n)-s(n)\geq 1/\poly(n)$, we obtain the inclusion $R\in\rsBQP_{\delta+1/q}$.
\end{proof}

\Cref{prop:guess-classical-witness} yields the question of achieving the same result with a quantum witness (i.e., the converse of \Cref{cor:no-collapse}).

\begin{conjecture}
    Does 
    $
    \rsQMA_{\delta}\not\subseteq\rsBQP_{\delta+1/p}
    $
    for some $\delta:\bbN\to[0,1]$ and polynomial $p:\bbN\to\bbN$ imply $\BQP\neq \QMA$? 
\end{conjecture}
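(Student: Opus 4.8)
The plan is to attempt the $\QMA$-analogue of the contrapositive argument behind \Cref{prop:guess-classical-witness}, namely to show that $\BQP=\QMA$ forces $\rsQMA_\delta\subseteq\rsBQP_{\delta+1/p}$ for every $\delta$ and every polynomial $p$. Fix $R\in\rsQMA_\delta[1-2^{-n},2^{-n}]$ with verifier $C_n$ taking a quantum witness. The soundness clause of \Cref{def-sQMA} guarantees that, when $xR\neq\emptyset$, any witness accepted with probability above the soundness threshold already makes the conditional output state lie within trace distance $\delta$ of $xR$, while completeness guarantees that some witness is accepted with high probability. Hence, on input $x\in\Lyes_R$, it would suffice to synthesize in quantum polynomial time some state $\ket{\psi}$ that $C_x$ accepts with high probability: feeding such a state into $C_n$ then synthesizes $xR$ up to error $\delta$, and the polynomial imprecision of the synthesis can be absorbed into the $1/p$ slack exactly as in the completeness analysis of \Cref{prop:guess-classical-witness}. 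The whole question thus reduces to a single one: under $\BQP=\QMA$, can a near-optimal quantum witness for $C_x$ be synthesized efficiently?

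This reduction pinpoints the obstacle. The remaining task is to synthesize a good quantum witness for $C_x$, which is itself a quantum state synthesis problem rather than a decision problem. Although the mere \emph{existence} of a good witness when $xR\neq\emptyset$ is a $\QMA$ fact, and hence a $\BQP$ fact under the collapse, \emph{producing} the corresponding state is a fundamentally synthesis-type task for which the hypothesis $\BQP=\QMA$ offers no direct leverage.

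The natural attack is to imitate the bit-by-bit construction of \Cref{GW2}. There the classical witness was recovered one symbol at a time by querying a $\QCMA$ promise language --- which collapses into $\BQP$ under the hypothesis --- asking whether a given prefix extends to a high-acceptance completion. First I would try to formulate a $\QMA$ promise language whose yes/no answers steer the construction of $\ket{\psi}$. This is exactly where the argument breaks: a $\QMA$ witness is not a short classical string but a vector in an exponentially large continuous space, so no polynomial-length sequence of yes/no answers can determine it, and the induction on witness length driving the proof of \Cref{GW2} has no counterpart. The collapse $\BQP=\QMA$ is a statement about decision power and carries no evident state-preparation content.

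I expect this to be the genuine obstacle, and to be why the question is open. A convincing resolution seems to require one of two things. The first is an unconditional ``witness-state synthesis from a decision oracle'' lemma for $\QMA$, recovering a near-optimal quantum witness using only $\QMA$-decision queries; this would itself be a substantial result and would settle the conjecture affirmatively through the reduction above. The second, to my mind the more realistic, is to refute the implication by exhibiting an oracle relative to which $\BQP=\QMA$ holds yet no good $\QMA$ witness can be synthesized, thereby separating the state-synthesis question from its decision shadow. Such a separation would reinforce the paper's broader theme that the synthesis error parameter records information to which decision classes are blind.
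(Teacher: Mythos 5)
This statement is posed in the paper as an open question, not a theorem, and the paper offers no proof --- only the remark that the witness-guessing technique of \Cref{GW2} fails because under $\BQP=\QMA$ there is no evident way to \emph{produce} a valid quantum witness (as opposed to a classical one) with a polynomial-size circuit. You correctly recognize that the question is open and identify exactly the same obstruction the paper does, so your analysis matches the paper's own treatment; just be aware that what you have written is a (sound) explanation of why the problem is hard, not a proof of the implication.
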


The technique of guessing a quantum witness by using the assumption $\BQP=\QMA$ 
could not be used here (except if $\QCMA=\QMA$) because using this technique would mean that there is a way to create a valid $\QMA$ witness by using classical information and with a polynomial-size circuit.

\section{Impossibility to reduce the synthesis error}
In this section we prove that it is impossible to reduce the synthesis error for the class $\rsBQP$.
Here is the main result:

\begin{thm}
    \label{thm:impossibility}
    For any $0<\epsilon(n)\leq\delta(n)\leq1-2^{-n}$ and $0\leq s(n)<c(n)\leq1$, 
    \[\rsBQP_\delta[1,0]\not\subset\rsR_{\delta-\epsilon}[c,s].\]
\end{thm}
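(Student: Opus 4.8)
The plan is to exhibit a single state synthesis relation $R$ with $R\in\rsBQP_\delta[1,0]$ but $R\notin\rsR_{\delta-\epsilon}[c,s]$; since the inclusion would have to contain \emph{every} relation, one counterexample refutes it. The whole construction lives on a \emph{single} output qubit, i.e. I take $k_R\equiv1$, and this is the crucial point: the hypothesis $\delta(n)\le1-2^{-n}$ specializes at $n=1$ to $\delta(1)\le\tfrac12$, hence $2\delta(1)\le1$, which is exactly what lets the geometric gadget below fit inside the one-qubit state space (where trace distance is at most $1$). For $k_R\ge2$ the bound $1-2^{-k}$ would exceed $\tfrac12$ and no such gadget fits, so pushing everything onto one qubit is what makes the argument uniform over all admissible $\delta$.

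I fix two one-qubit pure states $\ket{v_0},\ket{v_1}$ with algebraic amplitudes at trace distance $D:=\td(\dm{v_0},\dm{v_1})$, where $D$ is a rational chosen in the interval $\bigl(2(\delta(1)-\epsilon(1)),\,2\delta(1)\bigr)$; this interval has length $2\epsilon(1)>0$ and is contained in $(0,1]$ thanks to $\delta(1)\le\tfrac12$, so a valid $D$ exists. Let $(b_n)_{n\in\bbN}$ be a fixed uncomputable binary sequence (say the characteristic sequence of the halting problem) and define the total relation $R$ by $\Lyes_R=\Sigma^*$, $\Lno_R=\emptyset$, $k_R\equiv1$, and $xR=\{\dm{v_{b_{|x|}}}\}$. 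For $R\in\rsBQP_\delta[1,0]$ (\Cref{def-sBQP}), observe that the midpoint $\mu=\tfrac12\dm{v_0}+\tfrac12\dm{v_1}$ satisfies $\td(\mu,\dm{v_i})=D/2<\delta(1)$ for both $i$; hence a fixed, input-independent constant-size circuit that always accepts and outputs a good enough approximation of $\mu$ achieves completeness $1$, (vacuous) soundness $0$, and synthesis error $\le\delta(1)$. The decisive feature is that this circuit need not know $b_n$: the single state $\mu$ is simultaneously within $\delta(1)$ of both candidates.

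For the impossibility I show that $R\in\rsR_{\delta-\epsilon}[c,s]$ would make $(b_n)$ computable. Suppose such an $R$ is witnessed by a halting machine $M$ producing circuits $C_n$ (\Cref{def-sR}). Given $n$, run $M(1^n)$ to get $C_n$ and look at the yes-instance $x=0^n$. Since $c(n)>s(n)\ge0$ forces $c(n)>0$, completeness gives $\Pr(C_{0^n}^\acc=1)\ge c(n)>0$, so the conditional output $\tau_n:=C_{0^n}^{\outacc}$ is well defined and---being the output of a finite circuit over a fixed algebraic gate set---exactly computable from $n$. Completeness also gives $\td(\tau_n,\dm{v_{b_n}})\le\delta(1)-\epsilon(1)$. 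Because $D>2(\delta(1)-\epsilon(1))$, the two balls of radius $\delta(1)-\epsilon(1)$ about $\dm{v_0}$ and $\dm{v_1}$ are disjoint, so $\td(\tau_n,\dm{v_{b_n}})\le\delta(1)-\epsilon(1)<\td(\tau_n,\dm{v_{1-b_n}})$; comparing the two exactly-computable reals $\td(\tau_n,\dm{v_0})$ and $\td(\tau_n,\dm{v_1})$ (which differ by at least the fixed positive gap $D-2(\delta(1)-\epsilon(1))$) thus recovers $b_n$. This is a total algorithm computing $(b_n)$, contradicting its uncomputability, so $R\notin\rsR_{\delta-\epsilon}[c,s]$.

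The step needing the most care is keeping the construction effective while $\delta,\epsilon$ are arbitrary, possibly uncomputable, functions. The resolution is that $R$ is a \emph{fixed} object built from one rational $D$ chosen once in the interval: the achievability circuit is then a single hard-wired circuit (so the family is trivially uniform), and the recovery only ever compares exactly-computable trace distances of finite circuits to the fixed algebraic states $\ket{v_0},\ket{v_1}$, never evaluating $\delta$ or $\epsilon$. Choosing $D$ strictly inside the open interval provides the two-sided slack that makes both halves robust: the margin $\delta(1)-D/2>0$ absorbs the gate-approximation error when preparing $\mu$, and the gap $D-2(\delta(1)-\epsilon(1))>0$ guarantees the ball separation used in the recovery. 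Finally, since $|xR|=1$ for every $x$, the same $R$ lies in the single-target classes of prior work, so the impossibility transfers to those definitions as well.
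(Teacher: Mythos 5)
Your proof is correct, but it takes a genuinely different route from the paper's. The paper diagonalizes against the enumeration of uniform circuit families: for each family it finds, by a counting argument, a computational-basis string $u_n$ that the $n$-th circuit outputs with conditional probability at most $2^{-k(n)}$, and it takes as target the maximally mixed state with an extra weight $\delta$ placed on $u_n$; the maximally mixed state then witnesses membership in $\rsBQP_\delta[1,0]$, while any uniform family achieving error $\delta-\epsilon$ would have to output $u_r$ with conditional probability strictly above $2^{-k(r)}$, contradicting the choice of $u_r$. You instead reduce from uncomputability: the target is one of two fixed one-qubit states at trace distance $D\in\bigl(2(\delta(1)-\epsilon(1)),\,2\delta(1)\bigr)$, selected by an uncomputable bit $b_n$; the midpoint covers both candidates within $\delta(1)$ without knowing $b_n$, whereas error $\delta(1)-\epsilon(1)$ places the (simulable) conditional output in exactly one of two disjoint balls and hence reveals $b_n$, making the sequence computable. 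Both arguments exploit precisely the same two resources --- the uniformity of the synthesizer and the strict gap $\epsilon>0$ --- but yours is more elementary (no enumeration of circuit families, a single output qubit, a single hard-wired achievability circuit) and makes the information-theoretic nature of the obstruction explicit; it also sidesteps a mild circularity in the paper's construction, where the output dimension $k(n)$ of the relation is itself defined from the diagonalized circuit $C_n^n$. One presentational caveat: your recovery step leans on ``exactly computable'' trace distances over an algebraic gate set. This is not essential --- two computable reals known to be distinct can be compared by refining approximations until the intervals separate, and positivity of the acceptance probability can likewise be certified by refinement --- but it is worth stating, since $\delta(1)-\epsilon(1)$ and the gap $D-2(\delta(1)-\epsilon(1))$ need not themselves be computable, so the algorithm cannot fix a precision in advance.
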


\Cref{thm:impossibility} shows the impossibility to reduce $\delta$ even when arbitrary computational power is available and even without a gap between $c$ and $s$. The following is a straightforward corollary:
\begin{corollary}
    \label{cor:impossibility}
    For any $0<\epsilon(n)\leq\delta(n)\leq1-2^{-n}$, \[\rsBQP_\delta\not\subset\rsBQP_{\delta-\epsilon}.\]
\end{corollary}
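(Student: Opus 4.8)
Since \Cref{thm:impossibility} is stated just above, the corollary reduces to it in two lines: by definition $\rsBQP_{\delta-\epsilon}=\rsBQP_{\delta-\epsilon}[2/3,1/3]\subseteq\rsR_{\delta-\epsilon}[2/3,1/3]$ (a uniform polynomial-size family is a special case of a family produced by an unbounded Turing machine), while $\rsBQP_\delta[1,0]\subseteq\rsBQP_\delta[2/3,1/3]=\rsBQP_\delta$ because completeness $1$ and soundness $0$ only strengthen the requirements. Hence any relation witnessing $\rsBQP_\delta[1,0]\not\subset\rsR_{\delta-\epsilon}[2/3,1/3]$ also witnesses $\rsBQP_\delta\not\subset\rsBQP_{\delta-\epsilon}$. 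The entire content therefore lies in \Cref{thm:impossibility}, whose proof I would organise as follows.

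The plan is to \emph{diagonalise against all computable circuit families}. Fix the functions $\delta,\epsilon,c,s$ and set $k_R(n)=n$, so the tolerated error on $n$ qubits is $\delta(n)\le 1-2^{-n}$. Let $\mu_n$ be the maximally mixed state on $n$ qubits and let $\Sigma_n=\{\tau\in\mathcal O_n : \td(\tau,\mu_n)=\delta(n)\}$; this sphere is non-empty since $\delta(n)\le 1-2^{-n}=\max_\tau\td(\tau,\mu_n)$. I build a relation $R$ with $\Lno_R=\emptyset$ (so $L_R=\Sigma^*$ is trivially decidable and soundness is vacuous) and a single target $xR=\{\tau_x\}$ with $\tau_x\in\Sigma_{|x|}$ for every $x$. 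Enumerating all Turing machines $M_1,M_2,\dots$ as candidate $\rsR$ synthesisers and assigning an infinite block of inputs to each, I define $\tau_x$ for an $x$ assigned to $M_i$ by running $M_i(1^{|x|})$: if it halts on a valid circuit $C$ with $\Pr(C_x^\acc=1)>0$, I put $\rho=C_x^{\outacc}$ and pick $\tau_x\in\Sigma_{|x|}$ with $\td(\rho,\tau_x)\ge\delta(|x|)$ (possible by the lemma below); otherwise $\tau_x$ is any fixed element of $\Sigma_{|x|}$. This makes $R$ a well-defined (generally uncomputable) state synthesis relation, which is fine: only the synthesisers, not $R$ itself, must be constrained.

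The two memberships are then easy. For $R\in\rsBQP_\delta[1,0]$, use the trivial synthesiser that always accepts and outputs $\mu_{|x|}$: it is uniform and polynomial-size, has completeness $1$ and soundness $0$, and satisfies $\td(\mu_{|x|},xR)=\td(\mu_{|x|},\tau_x)=\delta(|x|)=\delta(k_R(|x|))$ for every $x$, because all targets lie on $\Sigma_{|x|}$. For $R\notin\rsR_{\delta-\epsilon}[c,s]$, observe that any genuine $\rsR$ synthesiser is some $M_i$ halting on all $1^n$; on the input $x$ assigned to $M_i$, either its acceptance probability is below $c$, or its conditional output $\rho$ satisfies $\td(\rho,\tau_x)\ge\delta(|x|)>\delta(|x|)-\epsilon(|x|)$ by construction. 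Either way $M_i$ violates the completeness requirement of $\rsR_{\delta-\epsilon}[c,s]$, so no machine synthesises $R$ with error $\delta-\epsilon$.

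The step I expect to be the crux is the \emph{geometric lemma}: for every density matrix $\rho$ on $n$ qubits there is $\tau\in\Sigma_n$ with $\td(\rho,\tau)\ge\delta(n)$; equivalently, no single state lies within $\delta-\epsilon$ of all of $\Sigma_n$, i.e.\ the Chebyshev radius of $\Sigma_n$ equals $\delta(n)$ with centre $\mu_n$. I would prove it by symmetrisation, using that $\Sigma_n$ is invariant under conjugation by unitaries and that $\tau\mapsto\td(\rho,\tau)$ is convex: for $U$ Haar-distributed one has $\bbE_U\,U\rho U^\dagger=\mu_n$ and $\td(U\rho U^\dagger,\tau)=\td(\rho,U^\dagger\tau U)$, whence
\[
\delta(n)=\max_{\tau\in\Sigma_n}\td(\mu_n,\tau)\le\max_{\tau\in\Sigma_n}\bbE_U\,\td(\rho,U^\dagger\tau U)\le\max_{\tau'\in\Sigma_n}\td(\rho,\tau'),
\]
so some $\tau'\in\Sigma_n$ has $\td(\rho,\tau')\ge\delta(n)>\delta(n)-\epsilon(n)$. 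Everything else is routine bookkeeping: making the assignment of inputs to machines effective, checking the promise is respected (immediate since $\Lno_R=\emptyset$), and noting that instantiating the theorem with $c=2/3,\ s=1/3$ yields the corollary via the inclusions above.
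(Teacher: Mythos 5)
Your two-line reduction of the corollary to \Cref{thm:impossibility} is exactly what the paper does (it calls the corollary ``straightforward''), so the real comparison concerns your proof of the theorem, which is correct but follows a genuinely different route. The paper also diagonalizes against the enumeration of uniform circuit families and also uses the maximally mixed state as the exactly-$\delta$ approximator, but its hard targets are explicit: for the $r$-th family it finds, by counting, a string $u_r$ that the output of $C^r_r$ (measured in the computational basis, conditioned on acceptance) yields with probability at most $2^{-k(r)}$, and takes as target the diagonal state placing weight $2^{-k(r)}+\delta(r)$ on $u_r$; any state within trace distance $\delta-\epsilon$ of that target must yield $u_r$ with probability at least $2^{-k(r)}+\epsilon(r)$, a contradiction. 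You instead pick an abstract target on the sphere of states at trace distance exactly $\delta(n)$ from the maximally mixed state, chosen at distance at least $\delta(n)$ from whatever the candidate circuit outputs, and justify its existence by a Haar-twirling/convexity argument. Both are sound. Your route dispenses with the paper's appeal to gap amplification (\Cref{amplification-R}), since an acceptance probability below $c$ already violates completeness, and your geometric lemma is the more general statement; the paper's route buys explicitness --- its hard relation consists of computational-basis-diagonal states, and indeed $\rho_n^\delta$ is precisely one admissible instance of your $\tau_x$ (a mixture of the maximally mixed state with a point mass on a rarely-output string), so the counting argument can be read as an elementary, constructive special case of your symmetrization lemma. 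Minor remarks: the ``infinite block of inputs per machine'' is unnecessary (one input per machine suffices, as the paper's unary relation shows), and you should note that your sphere is compact so the supremum in your lemma is attained (or simply observe that any $\tau'$ with $\td(\rho,\tau')>\delta(n)-\epsilon(n)$ already suffices for the contradiction).
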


This result holds for any class, as long as it is possible to synthesize the maximally mixed state. It holds for the definitions used in prior works as well, even when considering only the inputs in unary \cite{RY22,MY23,DLLM23} as we actually do in the proof of \Cref{thm:impossibility}.

\begin{proof}[Proof of \Cref{thm:impossibility}]

    We use a diagonal argument to construct a family of strings that cannot be generated with non-trivial probability, and then use it to construct quantum states that can be approximated by a mixed state with error $\delta$ but such that generating it with error strictly smaller than~$\delta$ implies that the family of strings can be generated with non-trivial probability.

    \medskip\noindent\textbf{Constructing strings from a diagonal argument.} Since we are considering uniform families of quantum circuits (i.e., families of quantum circuits generated by Turing machines) we can enumerate them. Let $\mathcal{C}^1$, $\mathcal{C}^2$, $\ldots$ be such an enumeration and for each $r\in\bbN$, let $\mathcal{C}^r=\left\{ C_n^r\right\}_{n\in\bbN}$ denote the circuits in the family. 
    
    For any $r\in\bbN$, we focus on the circuit $C_r^r$, i.e., we take $n=r$ (as usual in diagonal arguments). Let $k(r)$ denote the number of qubits of the output channel of the circuit $C_r^r$. Consider the procedure of Figure \ref{fig1}, which we call Procedure $\mathcal{P}$.
    \begin{figure}
    \begin{framed}
        \noindent Procedure $\mathcal{P}$
    \begin{itemize}
        \item[1.]
        Apply the circuit $C^r_r$ on the initial state $\ket{0}^{\otimes r}$.
        \item[2.]
         Measure the qubit corresponding to the acceptance bit. Let $b\in\{0,1\}$ denote the outcome. 
        \item[3.]
         Measure the output channel in the computational basis. Let $z\in\{0,1\}^{k(r)}$ denote the outcome.
    \end{itemize}   
\end{framed}\vspace{-2mm}
\caption{Procedure $\mathcal{P}$}
\label{fig1}
\end{figure}
    
        For any string $z\in\{0,1\}^{k(r)}$, let $p(z)$ denote  the probability of obtaining $z$ at Step 3 conditioned on getting $b=1$ at Step 2.
         From a straightforward counting argument, there is at least one $z$ such that 
         \[
            p(z)\le 2^{-k(r)}.
        \]  
        We denote this string (or one of them, chosen arbitrarily, if there are more than one) by $u_r$.

    \medskip\noindent\textbf{Constructing the relation.} For each $n\in\bbN$, define the quantum state
    \[
        \rho_n^\delta=\sum_{z\in\{0,1\}^{k(n)}}\alpha_{z}\ketbra{z}{z},        
    \] 
    where
    \[\alpha_{z}=
        \begin{cases}
            2^{-{k(n)}}+\delta(n)&\text{ if }z=u_n, \\
            2^{-{k(n)}}-\frac{1}{2^{-{k(n)}}-1}\delta(n)&\text{ otherwise.}
        \end{cases}
    \]
    Define the relation $R^\delta=\{(0^n,\rho_n^\delta)\:|\:n\in\bbN\}$. 
    Note that the maximally mixed state on $k(n)$ qubits is at distance $\delta(n)$ from $\rho_n^\delta$. Since the maximally mixed state can be generated by a polynomial-size circuit with probability $1$, we get 
    \[
        R^\delta\in\rsBQP_\delta[1,0].
    \]

    \medskip\noindent\textbf{Impossibility to generate the relation with error \boldmath{$<\delta$}.} Suppose that there exists a uniform circuit family that synthesizes $R^\delta$ with error $\delta-\epsilon<\delta$ and completeness and soundness $c>s$. From \Cref{amplification-R} we can assume without loss of generality that $c(n)\ge 1-\gamma(n)$ for some (computable) function $\gamma$ such that 
    \[
    0<\gamma(n)<1- \frac{2^{-k(n)}}{2^{-k(n)} +\epsilon(n)}.
    \]
    Let $\mathcal{C}^r=\left\{ C_n^r\right\}_{n\in\bbN}$ be this family, for some $r\in\bbN$. 
    Apply Procedure $\mathcal{P}$ described above on the circuit~$C_r^r$. 
    
    Consider $p(u_r)$, the probability of obtaining the string $u_r$ at Step 3 of the procedure conditioned on getting $b=1$ at Step 2. Observe that measuring the state $\rho_r^\delta$ in the computational basis gives outcome $u_r$ with probability 
    \[
        2^{-k(r)} + \delta(r).
    \]
    
      By completeness, the probability that $C_r^r$ accepts and generates a state at distance at most $\delta(r)-\epsilon(r)$ from $\rho_r^\delta$ is greater than $1-\gamma(r)$. 
      We thus have
      \begin{align*}
        p(u_r) &\ge 
        \left(1-\gamma(r)\right)\left(2^{-k(r)} +\delta(r) - \big(\delta(r)-\epsilon(r)\big)\right) \\
        &= \left(1-\gamma(r)\right)\left(2^{-k(r)} +\epsilon(r)\right) > 2^{-k(r)},
      \end{align*}
which is impossible by the construction of $u_r$.
\end{proof}

\section*{Acknowledgement}

Most of this work was done while HD was in visit at Nagoya University and financed by ENS Paris-Saclay, Université Paris-Saclay. The authors thank Yupan Liu and Masayuki Miyamoto for the fruitful discussions that happened there. FLG was supported by JSPS KAKENHI Grants Nos.\ JP20H00579, JP20H05966, JP20H04139, JP24H00071 and MEXT Quantum Leap Flagship Program (MEXT Q-LEAP) Grant No.~JPMXS0120319794.

\bibliographystyle{alpha}
\bibliography{biblio}

\end{document}